\DeclareSIUnit{\noop}{\relax}
\def\Vhrulefill{\leavevmode\leaders\hrule height 0.7ex depth \dimexpr0.4pt-0.7ex\hfill\kern0pt}
\newcommandx{\smf}[2][1=]{\todo[color=cyan!50,#1]{\sf \textbf{smf:} {\footnotesize #2}}\xspace}
\newcommandx{\AP}[2][1=]{\todo[color=magenta!50,#1]{\sf \textbf{Alex:} {\footnotesize #2}}\xspace}
\newcommand\algorithmicprocedure{\textbf{procedure}}
\newcommand{\algorithmicendprocedure}{\algorithmicend\ \algorithmicprocedure}
\newcommand\PROCEDURE[3][default]{%
  \ALC@it
  \algorithmicprocedure\ \textsc{#2}(#3)%
  \ALC@com{#1}%
  \begin{ALC@prc}%
}
\newcommand\ENDPROCEDURE{%
  \end{ALC@prc}%
  \ifthenelse{\boolean{ALC@noend}}{}{%
    \ALC@it\algorithmicendprocedure
  }%
}
\newenvironment{ALC@prc}{\begin{ALC@g}}{\end{ALC@g}}
\keywords{hypergraph, matching, semi-streaming}
\newcommand{\weight}{W}
\newcommand{\bestEdgeSymbol}{\mathcal{B}}
\newcommand{\cpp}{\textsf{C}\texttt{++}\xspace}
\authorrunning{Reinstädtler et al.}
\author{Henrik Reinstädtler}{Heidelberg University, Germany}{henrik.reinstaedtler@informatik.uni-heidelberg.de}{https://orcid.org/0009-0003-4245-0966}{}
\author{S M Ferdous}{Pacific Northwest National Laboratory, Richland, WA, USA}{sm.ferdous@pnnl.gov}{https://orcid.org/0000-0001-5078-0031}{}
\author{Alex Pothen}{Purdue University, West Lafayette, IN, USA}{apothen@purdue.edu}{https://orcid.org/0000-0002-3421-3325}{}
\author{Bora Uçar}{CNRS and LIP, ENS de Lyon, France \and UMR5668 (CNRS, ENS de Lyon, Inria, UCBL1) France \and 
\url{http://perso.ens-lyon.fr/bora.ucar}}{bora.ucar@ens-lyon.fr}{https://orcid.org/0000-0002-4960-3545}{}
\author{Christian Schulz}{Heidelberg University, Germany}{christian.schulz@informatik.uni-heidelberg.de}{https://orcid.org/0000-0002-2823-3506}{}
\title{Semi-Streaming  Algorithms for Hypergraph Matching}
\date{}
\begin{document}

\maketitle

\begin{abstract} %

We propose two one-pass streaming algorithms for the $\mathcal{NP}$-hard hypergraph matching problem. The first algorithm stores a small subset of potential matching edges in a stack using dual variables to select edges. It has an approximation guarantee of $\frac{1}{d(1+\varepsilon)}$ and requires $\mathcal{O}((\frac{n}{\varepsilon}) \log^2{n})$ bits of memory, where $n$ is the number of vertices in the hypergraph,  $d$ is the maximum number of vertices in a hyperedge, and $\epsilon > 0$ is a parameter to be chosen.
The second algorithm computes, stores, and updates a single matching as the edges stream, with an approximation ratio dependent on a parameter $\alpha$. Its best approximation guarantee is $\frac{1}{(2d-1) + 2 \sqrt{d(d-1)}}$, and it requires only $\mathcal{O}(n)$ memory.

We have implemented both algorithms and compared them with respect to solution quality, memory consumption, and running times on two diverse sets of hypergraphs with a non-streaming greedy and a naive streaming algorithm.
 Our results show that the streaming algorithms achieve much better solution quality than naive algorithms when facing adverse orderings.
 Furthermore, these algorithms reduce the memory required by  a factor of 13 in the geometric mean on our test problems, and also outperform the offline Greedy algorithm in running time.

\end{abstract}

\clearpage

\section{Introduction}
We propose two streaming algorithms for the hypergraph matching problem, derive their approximation ratios, and implement them to evaluate their practical performance on a large number of test hypergraphs. 
Recall that hypergraphs are a natural extension of graphs and can help to model our ever evolving earth and society. 
In a hypergraph, a hyperedge is a subset of vertices and can contain any number of them, instead of just two.
The hypergraph matching problem asks for a set of vertex-disjoint hyperedges.
Two common objectives in the hypergraph matching problem are to maximize the number or total weight of the matching hyperedges.
The hypergraph matching problem has applications ranging from personnel scheduling~\cite{froger2015set} to resource allocations in combinatorial auctions~\cite{gottlob2013decomposing}.
The hypergraph matching problem with either of the objective \hbox{functions is~$\mathcal{NP}$-complete~\cite{approxresult}}, but a simple Greedy algorithm is $1/d$-approximate, where $d$ is the maximum size of a hyperedge~\cite{Besser+:Greedy,Korte+:Greedy}. 

There are a few papers with computational studies for the hypergraph matching problem, in a single CPU in-memory setting~\cite{dufosse2019effective} and in a distributed computing setting~\cite{hanguir2021distributed}.
However, %
little to no attention was paid to increasing 
data sizes and approximation guarantees.
Streaming and semi-streaming algorithms address this trend of ever-increasing data size.
In a streaming setting, hyperedges arrive one by one in arbitrary order. %
The amount of memory that can be used is strictly bounded by the size of the final solution.
In the case of hypergraph matching, this is~$\Theta(n)$,  where $n$ is the number of vertices in the hypergraph, because every vertex can be matched at most once.
For a semi-streaming setting, this criterion is relaxed to allow for an additional polylog factor.
Furthermore, establishing a bound on the degree of suboptimality is essential for evaluating the solution's effectiveness. 
For matchings in graphs, there is a semi-streaming algorithm having an approximation~guarantee of~$\frac{1}{2}$ \hbox{by Paz~and~Schwartzman~\cite{10.1145/3274668}.}

The algorithm by Paz~and~Schwartzman requires one variable per vertex, the %
dual variable, and uses a stack to store candidate matching edges.
 When an edge appears in the stream, the algorithm adds it to a stack if its weight dominates the sum of the duals of its vertices, and then updates the duals with the difference between the edge weight and the sum of duals. Otherwise, the edge is discarded.
After all edges have been streamed, non-conflicting edges are added from the stack beginning at the top. %
Ghaffari and Wajc~\cite{ghaffari2017simplified} give a simplified proof of the approximation guarantee of this algorithm 
using the primal-dual linear programming framework.

Our contributions are as follows.
We first propose a novel streaming framework for hypergraph matching and prove an approximation guarantee in relation to the largest hyperedge size
by extending the stack-based algorithm of Paz and Schwartzman~\cite{10.1145/3274668} to hypergraphs. 
In essence, our algorithm puts hyperedges that potentially belong to a good matching on a stack, and in the end computes a matching out of those hyperedges.
 Given the maximum hyperedge size~$d$ (the maximum number of vertices in a hyperedge), we use primal-dual techniques to prove a~$\frac{1}{d(1+\varepsilon)}$ approximation factor, where $\epsilon>0$ is  a parameter to be chosen.
 Our most memory-saving algorithm requires $\mathcal{O}(n\log^2{n}/\varepsilon)$ bits of space.
 We then propose a second family of algorithms which do not need a stack and require less space and work by greedily swapping hyperedges from the current matching with the incoming ones. 
 Like the first stack-based algorithm, this algorithm family requires~$\mathcal{O}(\left|E\right|\cdot d)$ work.
 They have an approximation guarantee depending on a factor $\alpha>0$, which can be tuned to result in a guarantee of $\frac{1}{(2d-1)+2\sqrt{(d-1)d}}$.
In experiments, we show the competitiveness of our approaches and benchmark them on a set of social link hypergraphs and a large set of instances from hypergraph partitioning.
 We compare our algorithms with a \textsf{Naive} streaming algorithm that maintains a maximal matching in the hypergraph by adding a hyperedge from the stream to the match if it does not overlap in any vertex with the current matching edges, and a non-streaming \textsf{Greedy} algorithm. 
 The stack-based algorithms reduce the memory consumption by up to~13 times in comparison to the non-streaming \textsf{Greedy} algorithm on social link hypergraphs.
We investigate the impact of ordering the hyperedges in the stream and show that our stack algorithm can handle them better than the non-streaming \textsf{Greedy} and \textsf{Naive} streaming algorithms, while requiring only 26\% more time than the \textsf{Naive} algorithm. 
We show the impact of the parameters $\alpha$ and $\varepsilon$ on their respective algorithms.

The rest of the paper is organized as follows.
After introducing the notation and related work in Section~\ref{sec:prelim}, we show our approximation guarantee for an adaptation of the Paz-Schwartzman semi-streaming algorithm and discuss further improvements in Section~\ref{sec:quality}.
Section~\ref{sec:stackless} introduces our greedy swapping algorithm for streaming based on McGregors~\cite{mcgregor2005finding} algorithm.
These approaches are then extensively evaluated by experiments in Section~\ref{sec:experiments}. We conclude in Section~\ref{sec:conclusion}.
The Appendix~\ref{sec:stackless:quality} of the paper contains 
the proof of the approximation guarantee for the algorithm from Section~\ref{sec:stackless}, and additional statistics on the test problems as well in Appendix~\ref{app:stats}.
\section{Preliminaries}\label{sec:prelim}
\subsection{Basic Concepts}

\subparagraph{Hypergraphs.}A \emph{weighted undirected hypergraph}~$H=(V,E,\weight)$ consists of a set~$V$ of~$n$ vertices and a set $E$ of~$m$ hyperedges.
 Each hyperedge~$e$ is a subset of vertices and is assigned a positive weight by the weight function~$\weight\colon E\to \mathbb{R}_{>0}$.
 The number of vertices in a hyperedge~$e$ is its size, and is denoted by~$\left|e\right|$, and the maximum size of a hyperedge or rank of the hypergraph is denoted by~$d:=\max_{e\in E}{\left|e\right|}$.
 For clarity and brevity, we refer to a hyperedge simply as an edge
when it is evident from the context that a hypergraph is under
consideration.
  
 \subparagraph{Matching.} A subset of (hyper-)edges~$M\subset E$ is a \emph{matching}, if all (hyper-)edges in $M$ are pairwise disjoint, i.e.,~only at most one (hyper-)edge is selected at every vertex.
  A matching $M$ is called maximal, if there is no (hyper-)edge in~$E$ which can be added to $M$ without violating the matching constraint. The weight of a matching is defined by~$\weight(M):=\sum_{e\in M}\weight(e)$ and a maximum matching is a \hbox{matching with the largest weight.}

  \subparagraph{Related $\mathcal{NP}$-hard Problems. }
  The unweighted hypergraph matching problem is closely related to the maximum independent set and the~$k$-set packing problems.
  Both problems are $\mathcal{NP}$-hard~\cite{karp2010reducibility}.
 An independent set in a graph is a subset of vertices in which no two vertices are adjacent.
 There is a simple transformation from hypergraph matching to maximum independent set using the line graph of the hypergraph; in the line graph, every hyperedge becomes a vertex, and two such vertices are connected if the corresponding hyperedges share a common vertex.
Given a ground set~$S$ and some subsets~$S_1,\dots,S_n$, each of size at most~$k$, the~$k$-set packing problem asks to select the maximum number of disjoint subsets.
It can be translated to the hypergraph matching setting, by choosing the set~$S$ to correspond to the vertices~$V$, while the subsets~$S_1,\dots,S_n$ correspond to the~hyperedges. 

 \subparagraph{(Semi-)Streaming Algorithms.}
 If the input size exceeds the memory of a machine, a typical solution is to stream the input.
 There are several definitions for streaming in graphs and hypergraphs.
 When (semi-)streaming, the (hyper-)edges of a (hyper-)graph are usually presented in an arbitrary (even adverse) order one-by-one in several passes.
 In this paper we consider having only one pass over the input.
 In a streaming setting, the memory is strictly bounded by the solution size and, hence, for matching in hypergraphs, it is limited by the number of vertices.
 When using the semi-streaming model the memory is %
 bounded by~$\mathcal{O}(n\cdot\mathrm{polylog}(n))$.

 \subparagraph{Approximation Factors.}
 Algorithms can be classified into three categories: exact algorithms, heuristics without approximation guarantees, and approximation algorithms. The quality of an approximation algorithm is measured by comparing its solution's value to that of an optimal solution.
 For an instance $I$ of a maximization problem, let the optimum objective be denoted by $M(I)$. If an algorithm~$\mathcal{A}$ is guaranteed to find a solution that is bigger than $
\alpha M(I)$ for every instance,  where $\alpha \in \mathbb{R}^+_{<1}$ and is chosen to be as large as possible,  then $\mathcal{A}$ provides an $\alpha$-approximation guarantee. 
In some communities, the convention is to report $\frac{1}{\alpha}>1$ as the approximation~ratio, although we do not follow it here.
 For an overview of techniques to design approximation algorithms, we refer the reader to~Williamson~and~Shmoys~\cite{williamson2011design}.
 \subparagraph{(Integer) Linear Programs.}
 Many optimization problems can be formulated as integer linear programs (ILP). 
 In a maximization problem, an (integer) linear program finds an (integer) vector $x$ with components~$x_i$, that maximizes a linear cost function~$\sum c_ix_i$ such that a constraint~$Ax\leq b$, where~$A$ is a matrix, is satisfied, with typically additional constraints on the components of $x$, e.g., $x_i\geq 0$.
 If the variables of the problem are integer, some problems are~$\mathcal{NP}$-hard, while other problems 
 (where the constraint matrix is unimodular) are solvable in polynomial time~\cite{wolsey2020integer}.
 When the integer constraint is dropped, any linear program is solvable in polynomial time~\cite{wright1997primal}.
For every linear program one can find a dual problem~\cite{williamson2011design}.
 Given a maximization problem as described before, the dual problem is to find a vector $y_i\geq 0$, that minimizes~$\sum b_iy_i$ subject to~$A^Ty\geq c$.
The weak duality theorem~\cite{williamson2011design} states that for any primal maximization problem, the dual minimization problem for any feasible solution has an objective value
larger than the optimal solution of the primal problem.
The strong duality theorem states that if the primal problem has an optimal solution then the dual is solvable as well and the optimum values are the same.
For a more detailed introduction, we refer the reader to the Appendix~of~\cite{williamson2011design}.

\subsection{Related Work}
Matching is a well-studied problem in computer science, and here
 we give a brief overview of matchings in graphs and hypergraphs.
\subparagraph{Matching in Graphs and Streaming.}
The polynomial-time complexity of matchings in graphs is one of the classical results in theoretical computer science~\cite{edmonds_1965}.
While Preis~\cite{preis1999linear} presents the first linear time~$\frac{1}{2}$-approximation, Drake and Hougardy~\cite{drake2003simple} show a simpler algorithm with the same approximation ratio by path growing (PGA) in linear time. 
A number of other $1/2$-approximation algorithms have been developed, including the proposal-based Suitor algorithm of Manne and Halappanavar~\cite{DBLP:conf/ipps/ManneH14}.
 Pettie and Sanders~\cite{PETTIE2004271} propose a~$\frac{2}{3}-\varepsilon$ approximation with expected running time of~$\mathcal{O}(m\log \frac{1}{\varepsilon})$.
 The GPA algorithm by Maue and Sanders~\cite{maue2007engineering} bridges the gap between greedy and path-searching algorithms, showing that a combination of both works best in practice.
 Pothen, Ferdous and Manne~\cite{Pothen+:survey} survey these approximation algorithms. 
 Birn~et~al.~\cite{birn2013efficient} develop a parallel algorithm in the CREW PRAM model with $\frac{1}{2}$-approximation guarantee and $\mathcal{O}(\log^2n)$ work.
Feigenbaum~et~al.~\cite{feigenbaum2005graph} present a $\frac{1}{6}$-approximation for the weighted matching problem in the semi-streaming setting using a blaming-based analysis.
McGregor~\cite{mcgregor2005finding} develop a multipass algorithm that returns a $\frac{1}{2+\varepsilon}$ approximation in $\mathcal{O}(\varepsilon^{-3})$ rounds, with the initial matching having an approximation guarantee of $\frac{1}{3 + 2\sqrt{2}}$.
Paz and Schwartzman~\cite{10.1145/3274668} give a~$\frac{1}{2+\varepsilon}$-approximation algorithm, 
which employs a dual solution to admit candidate edges into a stack, while also 
updating the dual solution.
 The matching is constructed by removing edges from the top of the stack, and those that do not violate the matching property are added to the solution.
 The resulting matching is not necessarily maximal.
 Ghaffari and Wajc~\cite{ghaffari2017simplified} provide a simpler proof of the approximation ratio using a primal-dual analysis. 
 Ferdous~et~al.~\cite{ferdous_et_al:LIPIcs.SEA.2024.12} show empirically that the algorithm by Paz and Schwartzman can compete quality-wise with offline $\frac{1}{2}$-approximation algorithms like GPA, while requiring less memory and time.
Ferdous~et~al.~\cite{ferdous2023streaming} present two semi-streaming algorithms for the related weighted~$k$-disjoint matching problem, building upon the algorithms of  Paz and Schwartzman, and Huang and Sellier~\cite{huang_et_al:LIPIcs.APPROX/RANDOM.2021.14,10.1145/3274668} for streaming~$b$-matching.

\subparagraph{Hypergraph Matching.}
Hazan et~al.~\cite{approxresult} prove that for the maximum~$k$-set packing problem %
there is no approximation within a factor of~$\Omega(k/\log k)$ unless $\mathcal{P}=\mathcal{NP}$. This directly translates to $d$-uniform cardinality hypergraph matching, where every edge has size $d$, with $k=d$ and the number of edges selected is maximized.
Dufosse~et~al.~\cite{dufosse2019effective} investigate reduction rules and a scaling
argument for finding large matchings in~$d$-partite,~$d$-uniform hypergraphs.
There are several approximation results and local search approaches, most notably by Hurkens and Schrijver~\cite{hurkens1989size} and Cygan~\cite{cygan2013improved}  with an approximation guarantee of~$\mathcal{O}(\left(\frac{d+1+\varepsilon}{3}\right))$.
Hanguir and Stein~\cite{hanguir2021distributed} propose three distributed algorithms to compute matchings in hypergraphs, trading off between quality guarantee and number of rounds needed to compute a solution.

The Greedy algorithm for maximum weight hypergraph matching, which considers hyperedges for matching in non-increasing order of weights,  is $1/d$-approximate, where $d$ is the maximum size of a hyperedge~\cite{Besser+:Greedy,Korte+:Greedy}. 
For the weighted~$k$-set packing problem Berman~\cite{10.1007/3-540-44985-X_19} introduces a local search technique.
Improving on these results, Neuwohner~\cite{neuwohner2023passing} presents a way to guarantee an approximation threshold of~$\frac{k}{2}$. We are not aware of any practical implementations of these techniques.
For the more general weighted hypergraph~$b$-matching problem, 
Großmann~et~al.~\cite{grossmann2024engineering} present effective data-reduction rules and local search methods.
In the online setting, when hyperedges arrive in an adversarial order, and one must immediately decide to include the incoming hyperedge or not in the matching, Trobst and Udwani~\cite{trobst2024almost} show that no (randomized) algorithm can have a competitive ratio better than~$\frac{2+o(1)}{d}$.

We are unaware of any studies or implementations for streaming hypergraph matching.

\section{Stack-based Algorithm}

\label{sec:quality}
 We now present our first algorithm to tackle the hypergraph matching problem in the semi-streaming setting. 
Our algorithm uses dual variables to evaluate whether a streaming hyperedge
is good enough to be retained in a stack. %
Once the stream has been ingested, the stack is evaluated from top to bottom to determine a matching, achieving an approximation guarantee of $\frac{1}{d(1+\varepsilon)}$.
We also discuss a more permissive and lenient update function that allows more hyperedges into the stack.
Finally, we discuss the space complexity of our algorithms.

Algorithm~\ref{alg:simple:streaming} shows our framework for computing a hypergraph matching in a streaming setting.
This algorithm is an extension of an algorithm by Paz and Schwartzman~\cite{10.1145/3274668} proposed for graphs.
The algorithm starts with an empty stack and keeps a dual variable~$\phi_v$ for each vertex~$v$ of the hypergraph.
Throughout the algorithm the stack contains candidate hyperedges for inclusion in a matching.

 For each streamed hyperedge~$e$, the algorithm checks if the weight of the dual variables of~$e$'s vertices and thereby the solution can be improved by adding $e$ to the stack. More precisely,  with $\Phi_e := \sum_{v\in e}\phi_v$, the algorithm checks if $\weight(e)\geq \Phi_e(1+\varepsilon)$.
 The parameter~$\varepsilon\in\mathbb{R}_{\geq 0}$ is used to trade quality for memory. A~smaller $\varepsilon$ yields a better approximation guarantee, while a 
 higher~$\varepsilon$ yields a smaller memory consumption. 
 With $\varepsilon=0$, the algorithm no longer provides memory guarantees.
 If $\weight(e)\geq \Phi_e(1+\varepsilon)$, then $e$ is added to the stack, and the dual variables of the vertices of $e$ are then updated using an update function. 
 The update functions that we consider take $\mathcal{O}(1)$ time per vertex.
 
 After all hyperedges have been streamed, a matching containing only the hyperedges stored in the stack is computed.
To do that, our algorithm takes the hyperedges in reverse order from the stack and adds non-overlapping hyperedges to the matching. 
Note that hyperedges processed earlier that have conflicting heavier (later) hyperedges will be ignored. This is crucial to prove the performance guarantee later. 
 The amount of total work needed for scanning one hyperedge~$e$ is $\mathcal{O}(\left|e\right|)$, because we need to sum up the dual variables of $e$'s vertices and update them.

The update functions  %
determine whether an approximation guarantee can be provided and directly impact the results.
Upon processing a hyperedge $e$, the update function applied to the dual variable of a vertex $v\in e$ can use
the prior value~$\phi_v$ and sum~$\Phi_e:=\sum_{v\in e}\phi_v$ as well as $e$'s size and weight.
We define the following update function for proving the approximation guarantee 
{\begin{align}
  \phi_v^{\rm new} :=  \delta_{\mathrm{g}}(e,\phi_v,\Phi_e,\weight(e))& := \phi_v +\overbrace{ \weight(e)-\Phi_e}^{w'_e}. \label{phig}
\end{align}}%

The~$\delta_{\mathrm{g}}$ function is exactly the function used by Paz and Schwartzman~\cite{10.1145/3274668}. 
For a hyperedge $e$ added to the stack, the $\phi(v)$ value of its endpoints is increased by the potential gain in matching weight, $w'_(e)$. This follows since  $\phi(v)$ stores the gain in weight of all earlier edges incident on $v$ that have been added to the stack thus far. 

We introduce a different update function later.

At the core of this method are the variables~$\phi_v$ for each~$v$ and the update mechanism.
These core components and local ratio techniques are used to prove the approximation guarantee in the original work~\cite{10.1145/3274668}.
We follow the structure of the proof derived from a primal-dual analysis~\cite{ghaffari2017simplified}.

\begin{algorithm}[t]
{%
  \begin{algorithmic}[1]
    \PROCEDURE{StackStreaming}{$H=(V,E,\weight)$}
    \STATE~$S\gets emptystack$
    \STATE $\forall v\in V: \phi_v=0$
    \FOR{$e\in E$ in any (even adverse) order}
    \STATE $\Phi_e \gets \sum_{v\in e}\phi_v$
    \IF{$\weight(e)<\Phi_e(1+\varepsilon)$}\label{alg:simple:streaming:scan}
    \STATE next
    \ENDIF
    \STATE $S.push(e)$
    \FOR{$v\in e$}
      \STATE $\phi_v \gets \delta(e,\phi_v,\Phi_e,\weight(e))$ \COMMENT{update}\label{alg:simple:streaming:update}
    \ENDFOR
    \ENDFOR
    \STATE $M \gets \emptyset$
    \WHILE{$S\neq \emptyset$}
    \STATE $e \gets S.pop()$
    \IF{$\forall f\in M: f\cap e=\emptyset$  }
    \STATE$M\gets M \cup \{e\}$
    \ENDIF
    \ENDWHILE
    \ENDPROCEDURE
  \end{algorithmic}}
  \caption{Simple Streaming Algorithm.}
  \label{alg:simple:streaming}
\end{algorithm}
\subsection{Approximation Guarantee}\label{subsec:dual:problem}
We make use of the primal-dual framework for showing the approximation guarantee.
For general linear programming concepts, we refer the reader to books \cite{bazaraa2011linear,wright1997primal}, and for an excellent overview of primal-dual approximation method to the book~\cite{williamson2011design}.

We show that using~$\delta_{\mathrm{g}}$ update function~\eqref{phig} in our algorithm leads to a~$\frac{1}{d(1+\varepsilon)}$-approximation, when the stack is~unwound and hyperedges are selected in that order which they are on the stack.
When streaming the hyperedges in descending order of weights, we can prove that  the Greedy algorithm achieves a $1/d$ approximation factor.

We now proceed with a primal-dual analysis of an ILP formulation of the hypergraph matching problem.
In our formulation, there is a binary decision variable associated with each hyperedge to designate if that hyperedge is selected to be in the matching.
The objective function is to maximize the sum of the weights of the selected hyperedges.
 The constraint is to select at most one hyperedge containing a given vertex.
The linear programming relaxation, shown in Figure~\ref{fig:lp},  is obtained by dropping the binary constraint on the hyperedge variables.
\begin{figure}
\caption{Primal and Dual LPs for Hypergraph Matching.}
    \label{fig:enter-label}
    \centering
    \begin{subfigure}[t]{0.48\linewidth}
\subcaption{LP}
  {\small \begin{align*}
  \mathrm{maximize}\quad & \sum_{e\in E} \weight(e)x_e\\
  \mathrm{subject\, to }\quad & \\
  \forall v \in V\colon& \sum_{e\ni v}x_e \leq 1\\
  \forall e\in E\colon& x_e\geq 0.
\end{align*}}%
\label{fig:lp}
    \end{subfigure}
    \begin{subfigure}[t]{0.48\linewidth}
    \subcaption{Dual LP}
\label{fig:dual}
    {\small
        \begin{align}
  \mathrm{minimize}\quad & \sum_{v\in V} \phi_v\nonumber\\
  \mathrm{subject\, to }\quad & \nonumber\\
  \forall e \in E\colon& \sum_{v\in e}\phi_v \geq \weight(e)\label{eq:term:dual}\\
  \forall v\in V\colon& \phi_v\geq 0.\nonumber
\end{align}
}
    \end{subfigure}
\end{figure}

The objective value of  the relaxation is naturally greater or equal than the integer version of the linear program. 
The dual problem of the relaxed hypergraph matching is given in Figure~\ref{fig:dual}.
Following the weak duality theorem for linear programs, 
we know that any feasible solution of the dual has an objective value greater or equal to the objective value of any feasible primal solution.
 Furthermore, the optimal value of the linear program~$\mathrm{OPT}(LP)$ is equal for both problems~(strong~duality).
The first step for the proof is to check that the variables $(1+\varepsilon)\sum_{v\in e}\phi_v$ 
(from  Algorithm~\ref{alg:simple:streaming})
constitute a valid dual solution.

\subparagraph{Observation.} Function~$\delta_{\mathrm{g}}$ and Algorithm~\ref{alg:simple:streaming} generate valid dual solutions for all~$\varepsilon\geq 0$.
\begin{proof}

 For each hyperedge $e$ not on the stack, there was enough weight in the~$\phi_v$ values of its vertices, when~$e$ was scanned in Line~\ref{alg:simple:streaming:scan}.
 In the update for every added hyperedge to the stack, all vertices $\phi_v$ values are increased by $w(e)-\Phi_e$ such that clearly the sum of vertex $\phi_v$ values is higher than the weight of the hyperedge just added. 
 Therefore, for any hyperedge it holds~$\sum_{v\in e}\phi_v\geq \weight(e)$, satisfying the dual equation~\eqref{eq:term:dual}. 
\end{proof}

Such a valid dual solution has a greater objective value then the optimum solution of the relaxed dual, and the LP duality theorem gives an upper bound for every matching, including the optimal one ~$M^\ast$ by 
$\weight(M^\ast)\leq \mathrm{OPT}(LP)\leq (1+\varepsilon)\sum_v\phi_v.$

Now, we connect the changes to the dual variables with the hyperedges that have already been processed. 
 Define~
 \begin{align}
 \label{def:delta}
\Delta\phi^{e} =  \begin{cases}\sum_{v\in e}(\delta_{\mathrm{g}}(e,\phi_v,\Phi_e,\weight(e))-\phi_v)= \left|e\right|(\weight(e)-\Phi_e)& \text{if }e\in S\\
0& \text{else}\end{cases}\end{align}
as the change to the dual~$\sum_v\phi_v$ by inspecting~$e$.
 We  give a bound for the change of the dual variable w.r.t.~the preceding hyperedges in Lemma~\ref{lemma:prev:insp}.
 
\begin{lemma}
  \label{lemma:prev:insp}
  For a hyperedge~$e$, let~$\weight'_{e}:= \weight(e)-\Phi_{e}$.
  For each hyperedge~$e\in E$ added to stack~$S$, if we denote its preceding neighboring hyperedges (including itself) by~$\mathcal{P}(e):=\{c\mid c\cap e\neq \emptyset,c \text{ added before }\, e\}\cup\{e\}$, then 
 $\weight(e)\geq \sum_{e'\in \mathcal{P}(e)}\dfrac{1}{d}\Delta\phi^{e'} = \sum_{e'\in \mathcal{P}(e)} \weight'_{e'} .$
\end{lemma}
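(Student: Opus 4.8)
The plan is to unwind the recursive definition of the dual variables and express $\Phi_{e}$, the snapshot of $\sum_{v\in e}\phi_v$ taken when $e$ is scanned, in terms of the quantities $\weight'_{c}=\weight(c)-\Phi_{c}$ of the earlier neighbours of $e$. Throughout I read ``$c$ added before $e$'' as ``$c$ pushed onto the stack before $e$'', consistent with the algorithm's terminology, so that $\mathcal{P}(e)\subseteq S$. Two elementary observations drive the argument. First, every $c$ that is pushed satisfies $\weight'_{c}\ge 0$: the push test on Line~\ref{alg:simple:streaming:scan} admits $c$ only when $\weight(c)\ge(1+\varepsilon)\Phi_{c}\ge\Phi_{c}$, and hence also $\Delta\phi^{c}=\left|c\right|\,\weight'_{c}\ge 0$. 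Second, with the update rule $\delta_{\mathrm{g}}$, pushing a hyperedge $c$ with $v\in c$ increases $\phi_v$ by exactly $\weight'_{c}$ and nothing else ever changes $\phi_v$; so, by a one-line induction over the stream (with $\phi_v$ starting at $0$), at the instant $e$ is scanned we have $\phi_v=\sum_{c\ni v,\ c\text{ pushed before }e}\weight'_{c}$.

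The core step is then a change of summation order. Summing the second observation over $v\in e$ gives
\[
\Phi_{e}=\sum_{v\in e}\phi_v=\sum_{c\text{ pushed before }e}\left|c\cap e\right|\,\weight'_{c}\ \ge\ \sum_{c\in\mathcal{P}(e)\setminus\{e\}}\weight'_{c},
\]
where the inequality discards all $c$ disjoint from $e$ and bounds $\left|c\cap e\right|\ge 1$ on the remaining ones, which is legitimate precisely because each $\weight'_{c}$ appearing here is nonnegative. Adding $\weight'_{e}$ to both sides and using $\weight(e)=\Phi_{e}+\weight'_{e}$ yields $\weight(e)\ge\sum_{c\in\mathcal{P}(e)}\weight'_{c}$, the right-hand side of the claimed chain.

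It remains to pass to the $\Delta\phi$-form on the left of the chain. Since $\tfrac1d\Delta\phi^{c}=\tfrac{\left|c\right|}{d}\,\weight'_{c}\le\weight'_{c}$ for each $c\in S$ (with equality when $\left|c\right|=d$), we obtain $\weight(e)\ge\sum_{c\in\mathcal{P}(e)}\weight'_{c}\ge\sum_{c\in\mathcal{P}(e)}\tfrac1d\Delta\phi^{c}$; this is the displayed chain, the middle relation being an equality when every edge has size $d$ and an inequality in general, and it is exactly what the subsequent $\tfrac1{d(1+\varepsilon)}$-approximation argument needs. I expect the only place calling for care is the second observation: one must check that the $\Phi_{c}$ inside $\weight'_{c}$ is the snapshot taken at $c$'s own scan time---the same one used in the definition of $\Delta\phi^{c}$---so that the telescoping identity for $\phi_v$ is an exact equality, and that the sign $\weight'_{c}\ge 0$ is in place so the $\left|c\cap e\right|\ge 1$ bound goes in the right direction; once this bookkeeping is fixed, the remainder is the single displayed line above.
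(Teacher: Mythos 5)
Your proposal is correct and follows essentially the same route as the paper's proof: unwind the update rule to write each $\phi_v$ at the time $e$ is scanned as the sum of $\weight'_{c}$ over preceding pushed hyperedges $c\ni v$, sum over $v\in e$, discard non-neighbours and use $\left|c\cap e\right|\geq 1$ together with $\weight'_{c}\geq 0$, and add $\weight'_{e}$. Your version is in fact slightly more careful than the paper's (you make the nonnegativity explicit and correctly observe that the displayed relation $\tfrac{1}{d}\Delta\phi^{e'}=\weight'_{e'}$ is really an inequality $\tfrac{1}{d}\Delta\phi^{e'}\leq\weight'_{e'}$ unless $\left|e'\right|=d$, which is all the approximation argument needs), so there is no gap.
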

\begin{proof}
From the definition~\eqref{def:delta}, we have $\Delta\phi^e = \left| e\right| \weight'_{e}$ for~$\delta_{\mathrm{g}}$, because of line~\ref{alg:simple:streaming:update} of Algorithm~\ref{alg:simple:streaming}. 
$\Phi_e:= \sum_{v\in e}\phi_v$ is defined as the previous value of the dual variables  before inspecting~$e$. Each of these dual values $\phi_v$ consists of the sum $\sum_{c\in \mathcal{P}(e)\text{ s.t. }v\in c} \weight'_{c}$ for all preceding hyperedges. This leads to 
$\Phi_e= \sum_{v\in e}\phi_v  \geq \sum_{c\in \mathcal{P}(e)\setminus\{e\}}\frac{1}{\left|e\right|}\Delta\phi^{e'}\geq \sum_{c\in \mathcal{P}(e)\setminus\{e\}}\frac{1}{d}\Delta\phi^{c}$. 
So we can conclude \\ 
$\weight(e) = \weight'_e+ \Phi_e \geq \frac{1}{d}\Delta\phi^{e}+ \sum_{e'\in \mathcal{P}(e)\setminus\{e\}}\frac{1}{d}\Delta\phi^{e'}
$.
\end{proof}
The previous bound relates the weight of a hyperedge to the change in dual variables by its predecessors. In conclusion,
 we show that  our algorithm returns a~$\frac{1}{d(1+\varepsilon)}$-approximation.
\begin{lemma}
   Algorithm~\ref{alg:simple:streaming} with $\delta_{\mathrm{g}}$ function guarantees a~$\frac{1}{d(1+\varepsilon)}$-approximation. \label{lemma:dapprox}
\end{lemma}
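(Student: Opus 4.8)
The plan is to run a standard primal–dual (local-ratio) charging argument, following the structure of Ghaffari and Wajc, but with the factor $d$ appearing wherever the graph proof uses $2$. By the Observation, the vector $(1+\varepsilon)\phi$ is a feasible dual solution, so by weak duality $\weight(M^\ast) \le (1+\varepsilon)\sum_{v} \phi_v$. Hence it suffices to show $\sum_v \phi_v \le d\cdot \weight(M)$, where $M$ is the matching produced by unwinding the stack; combining the two inequalities then yields $\weight(M) \ge \frac{1}{d(1+\varepsilon)}\weight(M^\ast)$.

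To bound $\sum_v \phi_v$, I would first rewrite it as $\sum_v \phi_v = \sum_{e \in S} \Delta\phi^{e}$, since each push of a hyperedge $e$ to the stack is the only event that changes the duals, and it contributes exactly $\Delta\phi^{e} = |e|\,\weight'_e$ by definition~\eqref{def:delta}. So the goal becomes $\sum_{e \in S} \Delta\phi^{e} \le d\cdot\weight(M)$. The key step is to charge each term $\Delta\phi^{e}$ to matched hyperedges: I would show that for every $e \in S$ there is some $f \in M$ with $f \cap e \ne \emptyset$ and such that $e \in \mathcal{P}(f)$ (i.e. $e$ was pushed no later than $f$). This holds because of the reverse-order (LIFO) construction of $M$: when $e$ is popped, either it is added to $M$ (take $f = e$), or it conflicts with some already-added $f$, which was popped earlier and therefore pushed \emph{later} than $e$, so indeed $e \in \mathcal{P}(f)$. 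Assign each $e \in S$ to such an $f$. Then
\begin{align*}
\sum_{e \in S} \Delta\phi^{e} \;=\; \sum_{f \in M} \sum_{e \in S \,:\, e \mapsto f} \Delta\phi^{e}
\;\le\; \sum_{f \in M} \sum_{e \in \mathcal{P}(f)} \Delta\phi^{e}
\;\le\; \sum_{f \in M} d\cdot\weight(f)
\;=\; d\cdot\weight(M),
\end{align*}
where the last inequality is exactly Lemma~\ref{lemma:prev:insp} applied to $f$ (rearranged as $\sum_{e \in \mathcal{P}(f)} \Delta\phi^{e} = \sum_{e \in \mathcal{P}(f)} |e|\,\weight'_e \le d \sum_{e\in\mathcal{P}(f)} \weight'_e \le d\,\weight(f)$), and the middle step uses that the sets $\{e : e \mapsto f\}$ are contained in $\mathcal{P}(f)$ and are disjoint as $f$ ranges over $M$ (each $e$ is assigned to exactly one $f$).

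The main obstacle — and the point needing the most care — is the disjointness of the charging: I must make sure each $e \in S$ is charged to only one matched hyperedge, and that the chosen $f$ genuinely satisfies $e \in \mathcal{P}(f)$, so that Lemma~\ref{lemma:prev:insp} applies with the right orientation of "precedes." The LIFO order of the stack is exactly what makes this work: a conflict discovered at pop-time of $e$ is with a hyperedge popped earlier, hence pushed later, hence having $e$ among \emph{its} predecessors — note this is the opposite direction from what one might first guess, and getting it backwards would break the bound. One should also double-check the edge case where $\weight'_e$ could behave unexpectedly, but since $e$ is pushed only when $\weight(e) \ge \Phi_e(1+\varepsilon) \ge \Phi_e$, we have $\weight'_e = \weight(e) - \Phi_e \ge 0$, so all contributions are nonnegative and the inequalities above are valid. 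Everything else is routine bookkeeping.
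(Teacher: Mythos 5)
Your proposal is correct and follows essentially the same route as the paper's proof: dual feasibility of $(1+\varepsilon)\phi$ plus weak duality on one side, and on the other side the identity $\sum_v\phi_v=\sum_{e\in S}\Delta\phi^e$ combined with Lemma~\ref{lemma:prev:insp} and the observation that, by the LIFO unwinding, every stack hyperedge not in $M$ is a previously added neighbor of some matched hyperedge. Your explicit charging map just spells out what the paper leaves implicit (and in fact mere coverage of the stack edges by the sets $\mathcal{P}(f)$, $f\in M$, already suffices since all terms are nonnegative), so the argument matches the paper's.
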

\begin{proof}
  We show a lower bound on the weight of any matching $M$ constructed by the algorithm.
  For any hyperedge~$e$ not in the stack,  ~$\Delta\phi^e=0$, as when a hyperedge is not pushed into the stack, no dual variables are updated.
 Furthermore,
 any hyperedge in the stack that is not included in the matching must be a previously added neighbor of a matching hyperedge as defined in Lemma~\ref{lemma:prev:insp}.
  Therefore, 
  Lemma~\ref{lemma:prev:insp} applies, 
  and the weight can be lower-bounded.
   The sum of changes $\sum_{e}\Delta\phi^e$ to $\phi$ is equal to the sum of dual variables $\sum_v \phi_v$ at the end.
  We have
  
  \vspace{0.25cm}\noindent
  {$ \weight(M)  = \sum_{e\in M}\weight(e) \overset{\text{L.~\ref{lemma:prev:insp}}}{\geq} \sum_e\dfrac{1}{d}\Delta\phi^{e} 
                     \geq \dfrac{1}{d}\sum_e\Delta\phi^{e}=\dfrac{1}{d}\sum_v\phi_v \overset{\text{LP Duality}}{\geq} \dfrac{1}{d(1+ \varepsilon)} \weight(M^\ast)$}.
\end{proof}

\subsection{Improving Solution Quality} 
\label{subsec:ils}
  Now we look into optimizing the solution quality. 
  The design space for optimizations is vast, therefore we focus on simple yet effective techniques.
  We propose a second update function  allowing more hyperedges than $\delta_{\mathrm{g}}$ into the stack, with the same approximation guarantee.

\subparagraph{Lenient Update Function.}
The $\delta_{\mathrm{g}}$ in Equation~\ref{phig}  builds a dual solution much larger than needed. For every successfully added hyperedge, the difference between the current dual solution and the weight of the hyperedge is added to every vertex of the hyperedge.
 We address this by combining it with a scaling argument. The resulting function is
\begin{align}
  \delta_{\mathrm{lenient}}(e,\phi_v,\Phi_e,\weight(e)):= \phi_v+(\weight(e)-\Phi_e)/\left|e\right|. 
\label{phil}
\end{align}
This function produces a valid dual solution, and Lemma~\ref{lemma:prev:insp} also holds.
For every previously added neighboring hyperedge $e'$ the change $\Delta\phi^{e}=\weight'_{e'}$ was distributed over all vertices of the hyperedge, so $\Phi_e\geq\sum_{e'\text{ added before}}\frac{1}{d}\Delta\phi^{e'}$. Lemma~\ref{lemma:dapprox} follows and gives us the desired approximation factor of $\frac{1}{d(1+\varepsilon)}$.
\vspace{-0.25cm}
\subsection{Space Complexity Analysis} \label{subsec:space}
The space complexity of $\delta_{\mathrm{g}}$ and $\delta_{\mathrm{lenient}}$ can be deduced by simple counting arguments.
Let $W$ be the maximum normalized weight of a hyperedge in the hypergraph, i.e.,~$W:=\frac{\max_{e\in E}{\weight(e)}}{\min_{e\in E}{\weight(e)}}$, and let $W$ be $\mathcal{O}(\mathrm{poly}(n))$. We discuss both update functions separately.
\subparagraph{Guarantee Function.}
On every vertex we can observe up to $1+\log_{1+\varepsilon}(W)$ incrementing events, because every change in the dual variables has to be bigger by a factor of $(1+\varepsilon)$.
 This causes the stack to contain  $\mathcal{O}(\sum_v(1+\log{W}/\varepsilon))$ vertices in its edges. %
 Each vertex in an edge requires $\log{n}$ bits.
Therefore, the overall space complexity is $\mathcal{O}((1/\varepsilon) n\log^2{n})$ bits since $W$ is $\mathcal{O}(poly(n))$.
 
\subparagraph{Lenient Function.}
This function updates  $\phi_v$ for every vertex in an edge $e$ added to the stack by $\frac{\weight(e)-\Phi_v}{d}$,  resulting in $1+d\cdot\log_{1+\varepsilon}(W)$ possible increases to reach the total sum of $W$. 
Following the same argument, the stack's space complexity in bits is~$\mathcal{O}( (1/\varepsilon) nd\log^2{n})$. Note that with this update function the algorithm is semi-streaming only if $d$ is $\mathcal{O}(\mathrm{polylog}(n))$.

The time complexity per edge is in $\mathcal{O}(d)$ and $\Omega(md)$ overall in the scanning phase, as for every edge its vertices need to be scanned and optionally updated once. The unwinding of the stack takes $d$ checks per hyperedge, when reusing the memory of $\phi_v$ from the previous step. Overall, since the stack size is naturally bounded by $m$, the complexity of this algorithm is $\mathcal{O}(md)$.
In Section~\ref{sec:rq1} we show the difference in the stack size in experiments under several orderings of the input.
\begin{algorithm}[t]
  \caption{Swapping streaming algorithm.}
  \label{alg:simple:stackless}
   \algsetup{linenosize=\small}
  {%
  \begin{algorithmic}[1]
    \PROCEDURE{SwapSet}{($H=(V,E,\weight)$, $\alpha$)}
    \STATE $\forall v \in V\colon\bestEdgeSymbol_v = \bot$\COMMENT{{ Initialize best hyperedge to empty}}
    \STATE $\weight(\bot):=0$
    \FOR{$e\in E$ in arbitrary order}
    \STATE $C\gets \bigcup_{v\in e} \bestEdgeSymbol_v$
    \STATE $\Phi_e \gets \weight(C)$\label{line:upper:sum}\COMMENT{{ weight of hyperedges to be removed.}}
    \IF{$\weight(e)\geq(1+\alpha)\cdot\Phi_e$ }
    \FOR{$v\in e$}
    \IF{$\bestEdgeSymbol_v\neq \bot$}
    \FOR{$w\in \bestEdgeSymbol_v$}
    \STATE $\bestEdgeSymbol_w\gets \bot$  \COMMENT{{ Unmatch vertices in $\bestEdgeSymbol$.}}
    \ENDFOR 
    \ENDIF
      \STATE $\bestEdgeSymbol_v \gets e$
    \ENDFOR
    \ENDIF
    \ENDFOR
    \STATE $M\gets \bigcup_{v\in V}\bestEdgeSymbol_v$
    \RETURN $M$
    \ENDPROCEDURE
  \end{algorithmic}}
\end{algorithm}
\section{Greedy Swapping 
Algorithm}\label{sec:stackless}

We now propose a second streaming algorithm that computes, stores, and updates a matching in the hypergraph as the edges stream. It is conceptually similar to a streaming matching algorithm for graphs~\cite{mcgregor2005finding}.
 It requires a constant amount of memory per vertex, has an approximation factor that depends on the maximum size of a hyperedge $d$ and a parameter $\alpha$ 
 and obtains high-quality matchings in practice.

 The proposed approach is described in Algorithm~\ref{alg:simple:stackless}.
  We store for every vertex $v$ a reference to the current matching hyperedge containing $v$; the~$\bot$ sign symbolizes that no matching hyperedge contains $v$.
  For simplifying the presentation, we 
 define~$\weight(\bot)=0$.
  When we inspect a hyperedge $e$, we compute the sum of the weights of its adjacent hyperedges that are currently in the matching.
 If the weight of the incoming hyperedge is larger than $(1+\alpha)$ times the previous conflicting hyperedges, we first remove the previous hyperedges from all their vertices.
 Afterwards we can set the reference to the new incoming hyperedge. %
The overall space consumption of this algorithm is~$\mathcal{O}(n)$.
There are~$n$ references involved, and each hyperedge of size~$d$ holds the~$d$ vertices referencing it. 
Removing a hyperedge is linear~in~$d$, because up to $d$ vertices in $\mathcal{B}_v$ must be set to $\bot$; 
it follows that the total work~is~$\mathcal{O}(d \cdot |E|)$.

We show its approximation guarantee of $\frac{1}{(1+\alpha)(\frac{d-1}{\alpha}+d)}$, which is optimal for $\alpha=\sqrt{(d-1)/d}$, in the Appendix~\ref{sec:stackless:quality}.
In our experimental evaluation, we look at various values of $\alpha\in \{0,0.1,1\}$ and \hbox{$\alpha=\sqrt{(d-1)/d}$}. %
Note that for $\alpha=0$ the algorithm has no guarantee.
\vspace{-0.25cm}
\section{Experimental Evaluation}\label{sec:experiments}
We now evaluate our algorithms with respect to solution quality, running time, and memory usage. Specifically, we address the following research questions:

\begin{itemize}
    \item \textbf{RQ1:} How does the ordering of the hyperedges affect our metrics
(memory, running time, and quality)?
\item \textbf{RQ2:} How do other instance properties affect our algorithms in their memory needs? 
\item \textbf{RQ3:} How do our algorithms compare with offline greedy or naive streaming approaches?
\end{itemize}

\subsection{Setup and Data Set}
We implemented our approaches in \cpp using  \textsf{g}\texttt{++-}\textsf{14.2} with full optimization turned on (\texttt{-O3} flag).
We tested on two identical machines, equipped with 128 GB of main memory and a Xeon w5-3435X processor running at 3.10 GHz having a L3~cache of 45 MB each.
We repeat each experiment three times, and the results are compared only if experiments are run on the same machine and the same compute job.
For memory consumption, we used the jemalloc malloc implementation~\cite{jemalloc}.
The time needed for loading the hypergraph is not measured.
 We scheduled eight experiments (RQ1) and ten experiments (RQ2) to run in parallel, and the order of the experiments was randomized.
 In order to compare the results we use performance profiles as suggested by Dolan and Moré~\cite{DBLP:journals/mp/DolanM02}. 
  We plot the fraction of instances that could be solved within a factor $\tau<1$ of the best result per instance.
 In the plot (Figure~\ref{fig:res:rq3:size}),  the algorithm towards the top left corner is the best performer.

 Our benchmark includes general hypergraphs that are primarily used for partitioning and social link hypergraphs generated from question-answering websites.
In social link hypergraphs, a hypergraph matching can be used to summarize the overall websites, as it represents a subset of disjoint pages from different categories or users. %
In hypergraph partitioning, matchings can be used to contract the hypergraph in a multilevel scheme.
In a social hypergraph, each page (e.g., a post on StackOverflow or an article in Wikipedia) is considered as a hyperedge. %
The (\textsc{Threads}) graphs model participating users as vertices, whereas the (\textsc{Tags}) ones model the tags of the posts as vertices.
We use three stackexchange networks by Benson~et~al.~\cite{Benson-2018-simplicial}, and set the number of views as weights from~\cite{datastackexchange}.
For the StackOverflow instance, the weights are set by querying the public dataset from BigQuery~\cite{googlebigquerypublicdata}.
 We created an additional instance  from this source.
 We also generated a new hypergraph from the English Wikipedia dump, where the categories represent the vertices and the articles represent the hyperedges. We selected a category as a vertex if it has at least 25 mentions, which resulted in 293K vertices for 8M articles. The access frequency of each article in December 2024 is set as weight. We answer \textbf{RQ1} and \textbf{RQ3} with this data set.

We use the hypergraph data set $L_{HG}$  
collected earlier~\cite{DBLP:journals/corr/abs-2303-17679} for hypergraph partitioning to address \textbf{RQ2} and \textbf{RQ3}. 
The set consists of~94~instances,  spanning a wide range of applications from DAC routability-driven placement~\cite{dacset}, general matrices~\cite{sparsecollection} to SAT solving~\cite{satbenchmark}.
As weights we use $\max\left|e\right|-\left|e\right|$. This function maximizes the cardinality (number of edges matched).
 These instances have up to~$1.4\times 10^8$ hyperedges/-vertices and a maximum hyperedge size of %
 $2.3\times 10^6$ vertices.
 More statistics are in Table~\ref{tab:stats} and \ref{tab:stats:social}  in the Appendix.

For the stack-based approach of Section~\ref{sec:quality}, we implemented the \textsf{Guarantee} and \textsf{Lenient} update functions.
For simplicity, we refer to the algorithm with the stricter \textsf{Guarantee} function as \textsf{Stack} and the one with the lenient function as \textsf{StackLenient}.
Both approaches can be configured by their~$\varepsilon$ parameter.
  The algorithms from Section~\ref{sec:stackless} are named \textsf{SwapSet}. The \textsf{SwapSet} algorithm has a parameter $\alpha\geq 0$.
 For comparison, we use a non-streaming \textsf{Greedy} algorithm that sorts the hyperedges based on weight in descending order and greedily adds them to a matching.
We also implemented a \textsf{Naive} streaming algorithm that maintains a maximal hypergraph matching (i.e, it includes a hyperedge if it is feasible w.r.t the current matching) in the order that they are streamed.
\subsection{Impact of Streaming Order}
\label{sec:rq1}
We now investigate the impact of different streaming orders on our algorithms (\textbf{RQ1}). To this end, we stream the hyperedges of the social-link hypergraphs in three ways: by ascending weight, descending weight, and the original input order. The original input order is the order given by either the original files~\cite{Benson-2018-simplicial} or, in the case of the wikipedia instance, the order that the articles appear in the dump. 
\subparagraph{Memory.} Figure~\ref{fig:res:rq1:mem} shows the geometric mean memory consumption for all approaches grouped by the ordering.
Interestingly, for lower values of $\varepsilon$, the memory consumption is significantly higher for the ascending order, up to 2.97 times over the descending order (\textsf{StackLenient} $\varepsilon=0$). 
For the swapping based algorithms (\textsf{SwapSet}) the differences between the orderings are only minimal, nearly reaching the memory consumption of the \textsf{Naive} algorithm.
The \textsf{Greedy} algorithm utilizes the same amount of memory for all orderings, 13.43 times more memory than the stack approaches in the geometric mean.
The ordering heavily affects our stack-based algorithms, but they still require less memory than the \textsf{Greedy} approach.
The \textsf{SwapSet} algorithms are not significantly impacted by the order.
Both results are in line with the theoretical results derived in Section~\ref{subsec:space} and \ref{sec:stackless}.
\begin{figure}
    \centering
    \resizebox{\linewidth}{!}{\input{figures/experiments/esa-experiments/memory_catbar_0_all}}
    
    \caption{Geometric mean of the memory consumption on the social-link hypergraphs. Note the $\log{}$-scale on the $y$-axis.}
    \label{fig:res:rq1:mem}
\end{figure}
\subparagraph{Running time.}
In Figure~\ref{fig:res:rq1:time} the geometric mean of the running times of the compared algorithms are shown.
The \textsf{Naive} streaming algorithm is the fastest, requiring similar time over every ordering.
The ascending ordering requires more time for both greedy swapping and stack-based approaches since more replacements in the swapping algorithm and placements in the stack happen in that order. 
The stack-based approaches have a running time only 26\% higher than the \textsf{Naive}.
We observe that preordering the hyperedges speeds up the offline \textsf{Greedy} algorithm by a factor of 2.73 in comparison to the original ordering, but it is still slower than the stack-based algorithms.
This is due to our use of \textsf{std::sort} that is partly optimized for ordered data. 
The running time of the \textsf{SwapSet} algorithms is comparable to the running time of \textsf{Greedy} with natural ordering.
The cost for voiding edges in the \textsf{SwapSet} algorithm is higher than simply pushing edges on the stack.
In general, the streaming process of \textsf{Stack} and \textsf{StackLenient} is similar to that of \textsf{Naive}.
The small overhead is rooted in the required pass through the stack to build the  matching. %
 The ordering affects how many edges are added to the stack (see the following Memory section) and so the running time as well, but not as much as in the \textsf{Greedy}~algorithm.
 
\begin{figure}
    \centering
    \resizebox{\linewidth}{!}{\input{figures/experiments/esa-experiments/runtime_catbar_0_all}}
    \caption{Geometric mean of running times on the social-link hypergraphs.}
    \label{fig:res:rq1:time}
\end{figure}

\subparagraph{Quality.} We present the geometric mean of the weights of matchings in Figure~\ref{fig:res:rq1:weight}. %
As expected, the \textsf{Greedy} algorithm is not affected by the ordering since it computes its own order in the beginning. The quality of \textsf{Naive} algorithm is heavily affected by orderings, where the best weight is achieved when the hyperedges are streamed in descending order. On ascending and original order, the quality of the \textsf{Naive} is significantly worse.
Under the descending orderings, the swapping algorithm (\textsf{SwapSet}) produces the same results as the \textsf{Greedy} algorithm, whereas the ascending order produces worse quality results in the \textsf{SwapSet} algorithm. %
The \textsf{Stack} and \textsf{StackLenient} approaches are more robust under different orderings and compute better quality matchings in the adversarial ascending order compared to other algorithms..
\begin{figure}
    \centering
    \resizebox{\linewidth}{!}{\input{figures/experiments/esa-experiments/weight_catbar_0_all}}

    \caption{Geometric mean of weights on the social-link hypergraphs.}
    \label{fig:res:rq1:weight}
\end{figure}
\subsection{Impact of Other Properties}
\label{sec:rq2}
We now test whether some other properties of the instances have an impact on the memory consumption of the algorithm (\textbf{RQ2}).
Namely, we check if the number of vertices and the number of pins (sum of all edge sizes) are correlated with the memory consumption.
We use the data set by Gottesbüren~et~al.~\cite{DBLP:journals/corr/abs-2303-17679} from hypergraph partitioning, which contains many larger-scale instances.
 This data set contains 94 diverse instances, making it more suitable for statistical testing.
  The ordering is the original ordering as given by the files in~\cite{DBLP:journals/corr/abs-2303-17679}.
Instance details can be found in the Appendix in Table~\ref{tab:stats}.
A hyperedge $e$ in these instances has a weight of  $(\max_{e\in E}\left|e\right|)-\left|e\right|$, optimizing the number of hyperedges matched. %
 We set $\varepsilon$ (and also $\alpha$) to $\{0,1\}$ since they represent the two extremes in terms of memory~usage.

\subparagraph{Number of Vertices.}
Figure~\ref{fig:res:rq2:n} shows the memory consumption plotted against the number of vertices in each instance. %
The plot is on a $\log$-$\log$-scale.
Naturally, \textsf{Naive} requires the least amount of memory.
Our \textsf{SwapSet} algorithm follows, as well as the \textsf{Stack} algorithm and the \textsf{StackLenient}.
Finally, the offline \textsf{Greedy} requires even more memory.
This is also backed by the theoretical results of Sections~\ref{subsec:space} and \ref{sec:stackless}.
The higher the number of vertices, the smaller is the difference in magnitudes between the approaches. %
This is due to the additional memory needed for the list of finally matched hyperedges and the overhead in some allocations.
The measured peak allocation may contain some overallocated memory caused by the growing result vectors.
\begin{figure}
    \centering
    \includegraphics[width=0.8\linewidth]{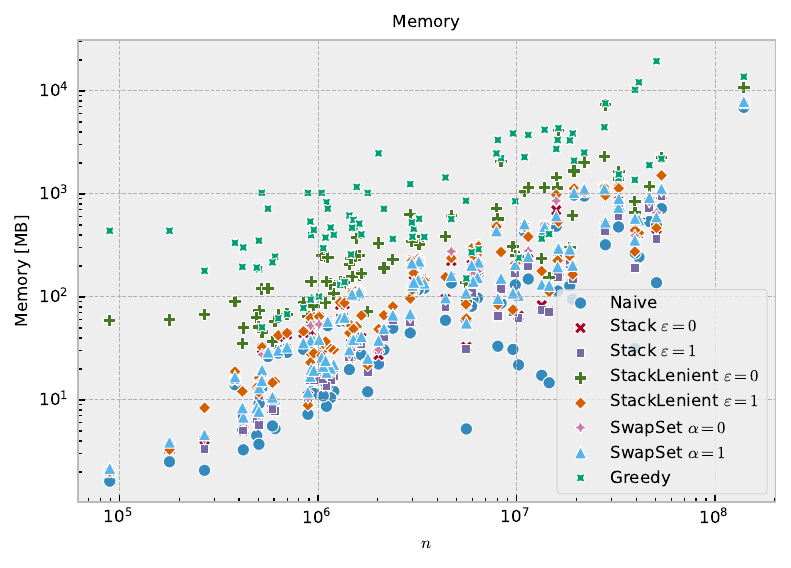}
    \caption{Geometric mean of the memory consumption on hypergraphs from partitioning plotted against the number of vertices.}
    \label{fig:res:rq2:n}
\end{figure}
\subparagraph{Number of Pins.}
In Figure~\ref{fig:res:rq2:pins}, a plot of the geometric mean memory consumption against the number of pins (sum of all edge sizes) for our approaches and the competitors is shown.
Additionally, we added a linear regression line for the \textsf{Greedy} algorithm's memory consumption.
The linear model achieves an $R^2$ score of $0.96$ when (randomly) splitting the data set into a training ($n=75$) and evaluation test set ($n=19$), on the latter.
This shows that the greedy algorithm requires memory linear in the number of pins since it loads the whole hypergraph at the start.
Our proposed algorithms require less memory, especially the  \textsf{SwapSet} algorithms nearly match the memory consumption of the \textsf{Naive} approach.
\begin{figure}
    \centering
    \includegraphics[width=0.8\linewidth]{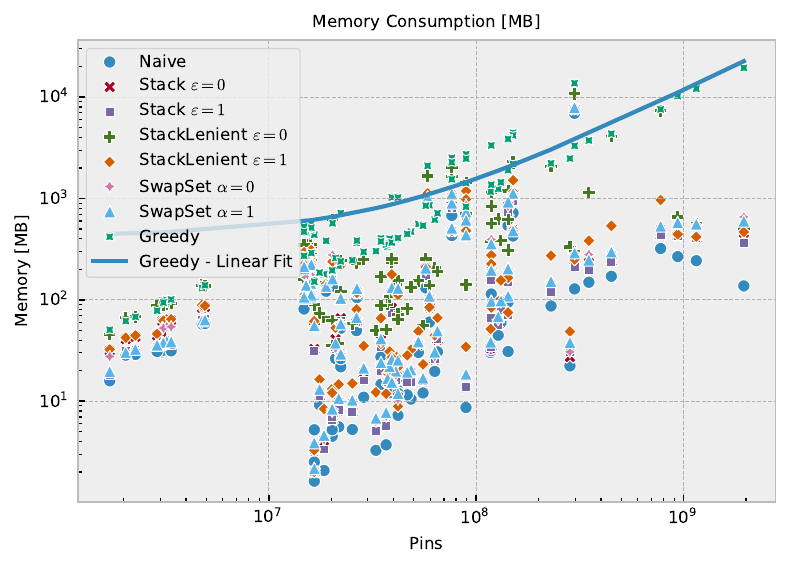}
    \caption{Geometric mean of the memory consumption on the  hypergraphs from partitioning. Linear Regression line for the offline \textsf{Greedy} algorithm.}
    \label{fig:res:rq2:pins}
\end{figure}
\subparagraph{Other Properties.}
We also tested the dependence of the memory consumption on the average hyperedge size and the number of hyperedges, but found no pattern.
 Note that the number of pins is equal to the product of the number of edges and the average hyperedge size.

\subsection{Comparison with Offline Greedy and Naive Streaming}
\label{sec:rq3}
 In this section, we compare the algorithms on the instances stemming from hypergraph partitioning (\textbf{RQ3}). The  Results  on the social set of instances can be found in Section~\ref{sec:rq1}.
Figure~\ref{fig:res:rq3:size} shows a performance profile for the cardinality/size of the matchings for our algorithms as well as the \textsf{Naive} streaming and offline \textsf{Greedy} algorithm.
The \textsf{StackLenient} variant with~$\varepsilon=0$ computes the biggest matchings and is the best performing algorithm, followed by \textsf{SwapSet}~$\alpha=0$.
The \textsf{Greedy} algorithm has similar results to \textsf{StackLenient}~$\varepsilon=1$.
The \textsf{Stack}~$\varepsilon=1$ and \textsf{SwapSet}~$\alpha=1$ are our worst-performing algorithms, having results very close to the \textsf{Naive} streaming algorithm. 
Their higher parameters ($\alpha=\varepsilon=1$) cause the algorithm to converge to the naive algorithm.

\begin{figure}
    \centering
    \includegraphics[width=0.8\linewidth]{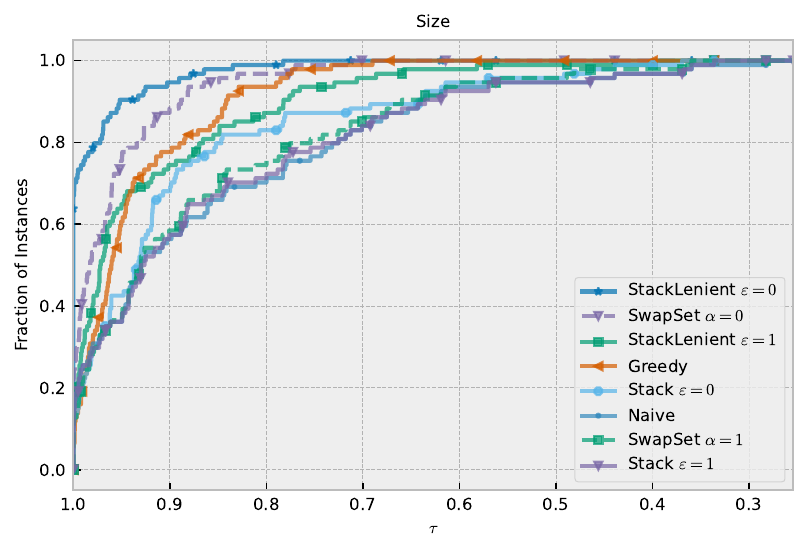}
    \caption{Performance Profile for the size of the matching in  hypergraphs used in partitioning.}
    \label{fig:res:rq3:size}
\end{figure}

\section{Conclusion}\label{sec:conclusion}
We have proposed two (semi-)streaming algorithms for hypergraph matching.
The first, inspired by Paz-Schwartzman~\cite{10.1145/3274668}, uses a stack and adds hyperedges to it, based on dual variables it keeps track of and updates according to an update function.
The approximation guarantee for this algorithm is $\frac{1}{d(1+\varepsilon)}$, and its running time per hyperedge is linear in hyperedge size.
We have proposed two other update functions to be used in this algorithm.
The proposed update functions result in a space complexity of $\mathcal{O}((1/\varepsilon)n\log^2{n})$ and $\mathcal{O}((1/\varepsilon)nd\log^2{n})$ bits.
The second proposed algorithm works by greedily swapping out hyperedges and maintaining only one solution, requiring only $\mathcal{O}(n)$ memory.
Inspired by McGregor's $\frac{1}{3+2\sqrt{2}}$-approximation guarantee~\cite{mcgregor2005finding}, we have shown that if every swap increases the weight by at least $(1+\alpha)$, the algorithm has an approximation guarantee of $\frac{1}{(1+\alpha)\left(\frac{d-1}{\alpha}+d\right)}$. The best choice is~$\alpha=\sqrt{(d-1)/d}$. 

 In extensive experiments, we have shown the competitiveness of the proposed algorithms in comparison to the standard non-streaming \textsf{Greedy} and a \textsf{Naive} streaming approach with respect to running time, memory consumption, and quality.
We showed that the order of the hyperedges in the stream directly impacts the solution quality and that our stack algorithms handle worst-case orderings (like ascending weights for the \textsf{Naive} algorithm) even better than the offline \textsf{Greedy} algorithm.
The running times of our \textsf{Stack} algorithm are only 26\% higher than the \textsf{Naive} algorithm.
Furthermore, we validated that the memory consumption for both families of algorithms is not linear in the number of pins, as it is for the \textsf{Greedy} algorithm; for the \textsf{Stack} approaches on the social link hypergraphs it is 13 times lower than \textsf{Greedy}.
Lastly, we report that our algorithm obtains considerably better cardinality matchings in general hypergraphs from partitioning tasks.

Avenues of future work include improving the solution quality and extending to problems with relaxed capacity constraints.
  We aim to develop a streaming algorithm that can efficiently handle instances with capacity $\scriptsize{b(v)>1}$ at each vertex, while maintaining a reasonable approximation ratio and computational~overhead.
\bibliography{compact}
\clearpage
\appendix
\section{Quality Guarantee for \textsf{SwapSet}}\label{sec:stackless:quality}
In the following we show that the Algorithm~\ref{alg:simple:stackless} has a quality guarantee of $\frac{1}{(1+\alpha)(\frac{d-1}{\alpha}+d)}$ for $\alpha>0$. By substituting $d=2$ and $\alpha=1$ we recover \hbox{$\frac{1}{6}$-approximation} guarantee by  Feigenbaum~et~al.~\cite{feigenbaum2005graph} on graphs. However, the best approximation guarantee is achieved when we set $\alpha=\sqrt{(d-1)/d}$. 
For graphs this returns the initial approximation guarantee of $\frac{1}{3+2\sqrt{2}}$ by McGregor~\cite{mcgregor2005finding}. 
Our analysis is inspired by McGregors~\cite{mcgregor2005finding} analysis for graphs. 
\begin{lemma}
  Let $d$ be the maximum hyperedge size  of the hypergraph $H$, $M$ be the matching returned by the \textsf{Greedy Swapping Algorithm} (Algorithm~\ref{alg:simple:stackless}), and $M^\ast$ be an optimal matching in $H$. Then
  
  \begin{align*}
    \weight(M)\geq f(\alpha,d)\cdot \weight(M^\ast)
  \end{align*}
  where $f(\alpha,d)$ is a function that depends only on $\alpha$ and $d$. Thus, 
  Algorithm~\ref{alg:simple:stackless} returns an $\frac{1}{f(\alpha,d)}$-approximate maximum weight matching.
\end{lemma}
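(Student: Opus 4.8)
The plan is to follow McGregor's charging argument, adapted to hyperedges. I would maintain throughout the execution of Algorithm~\ref{alg:simple:stackless} a notion of the ``cost'' paid for each swap: when a hyperedge $e$ enters $\bestEdgeSymbol$ and displaces a set $C$ of currently-matched hyperedges, the gain in total matched weight is at least $\weight(e) - \Phi_e \geq \alpha \Phi_e$, since the swap condition $\weight(e) \geq (1+\alpha)\Phi_e$ holds. The idea is to account for the weight of hyperedges that were ever matched but later displaced (the ``wasted'' weight) against the final matching weight $\weight(M)$, and then to charge the optimal matching $M^\ast$ against the sum of these two quantities.

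First I would define, for a final matching hyperedge $m \in M$, its \emph{trail}: the chain of hyperedges $m = e_0, e_1, e_2, \dots$ where $e_{i+1}$ is a hyperedge displaced (directly, through a shared vertex) when $e_i$ was inserted, recursing on all such displaced hyperedges. Because every swap multiplies the relevant dual sum by at least $1+\alpha$, the weights along any root-to-leaf path in this displacement forest form a geometric-like decay, so the total weight of all displaced hyperedges in the trail of $m$ is bounded by $\tfrac{1}{\alpha}\weight(m)$ (each $e_i$ that was displaced by $e_{i-1}$ contributed at most $\Phi_{e_{i-1}} \le \tfrac{1}{1+\alpha}\weight(e_{i-1})$ to that displacement, and a displaced hyperedge may have up to $d$ vertices so it can appear as the ``cause'' of several deeper displacements, which is where the $d$-dependence enters). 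Summing over $m \in M$ gives a bound of the form $\weight(\text{all ever-matched}) \le (1 + c(\alpha,d))\weight(M)$.

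Next I would charge $M^\ast$: take any $m^\ast \in M^\ast$. At the moment $m^\ast$ was streamed, either it was inserted (and then it is itself in the ever-matched set, possibly later displaced) or it was rejected because $\weight(m^\ast) < (1+\alpha)\Phi_{m^\ast}$, where $\Phi_{m^\ast}$ is the weight of the hyperedges then occupying $m^\ast$'s vertices. Those occupying hyperedges are all in the ever-matched set. Since $m^\ast$ has at most $d$ vertices and each vertex of $M^\ast$ is used once, each ever-matched hyperedge $g$ is blamed by at most $|g| \le d$ distinct optimal hyperedges this way; combined with the inserted case, $\weight(M^\ast) \le (1+\alpha)\cdot d \cdot \weight(\text{all ever-matched})$, up to bookkeeping. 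Chaining the two inequalities yields $\weight(M^\ast) \le (1+\alpha)\bigl(\tfrac{d-1}{\alpha} + d\bigr)\weight(M)$, i.e.\ $f(\alpha,d) = \tfrac{1}{(1+\alpha)(\frac{d-1}{\alpha}+d)}$, and optimizing over $\alpha$ gives $\alpha = \sqrt{(d-1)/d}$.

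The main obstacle I expect is getting the combinatorial constants exactly right in the displacement-forest bound: in the graph case each displaced edge has two endpoints and McGregor's recursion is clean, but with hyperedges of size up to $d$ a single insertion can displace up to $d$ hyperedges, and each of those can in turn have been the cause of further displacements, so one must carefully argue that the branching is absorbed into the $\frac{d-1}{\alpha}$ term rather than blowing up. I would handle this by proving a per-vertex invariant --- at each vertex $v$, the total weight of all hyperedges that were ever assigned to $\bestEdgeSymbol_v$ and then removed is at most $\tfrac{1}{\alpha}$ times the weight of the hyperedge finally at $v$ (or currently at $v$) --- and then summing over vertices while being careful not to double-count a hyperedge of size $k$ across its $k$ vertices. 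That per-vertex telescoping, together with the $(1+\alpha)d$ charging of $M^\ast$, is the crux; the rest is routine arithmetic and the final one-variable optimization of $(1+\alpha)(\frac{d-1}{\alpha}+d)$.
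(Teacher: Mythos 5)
Your overall route is the paper's: build the displacement forest (trails), bound each trail's weight by $\weight(e)/\alpha$, and charge each optimal hyperedge proportionally to the at most $d$ matched hyperedges that blocked it, with each charge at most $(1+\alpha)$ times the receiving edge's weight. But there are two concrete problems. First, the per-vertex invariant you propose as the crux is false: the hyperedge stored at $v$ can be evicted by a newcomer that conflicts with it only at a \emph{different} vertex $u$; afterwards $\bestEdgeSymbol_v=\bot$, and an arbitrarily light hyperedge can later occupy $v$. (Example: $\{v,u\}$ of weight $100$ is inserted, then $\{u,x\}$ of weight $1000$ evicts it, then $\{v\}$ of weight $1$ arrives and survives --- the weight evicted at $v$ is $100$, far more than $\tfrac{1}{\alpha}$ times the final weight $1$.) So no per-vertex telescoping holds. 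The branching worry that motivated this detour is in fact a non-issue: the swap condition compares $\weight(e)$ against the \emph{sum} $\Phi_e$ of all hyperedges being evicted at once, so level by level in the forest $\weight(C_{i+1}(e))\leq\weight(C_i(e))/(1+\alpha)$ regardless of how the evictions branch, and summing the geometric series gives $\weight(T(e))\leq\weight(e)/\alpha$ directly; this is exactly the paper's Claim, and it is the argument you should run instead of anything per-vertex.

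Second, the charging as you wrote it does not produce the constant you claim: if every ever-matched hyperedge may absorb $d$ blames of size up to $(1+\alpha)$ times its own weight, chaining with the trail bound gives $\weight(M^\ast)\leq(1+\alpha)\bigl(d+\tfrac{d}{\alpha}\bigr)\weight(M)$, not $(1+\alpha)\bigl(d+\tfrac{d-1}{\alpha}\bigr)\weight(M)$. The paper closes this gap with precisely the bookkeeping you dismissed: a trail hyperedge $t$ shares at least one vertex with the hyperedge that replaced it, and charges from optimal hyperedges incident at such shared vertices are re-attributed upward toward the surviving hyperedge (which has at most $d$ vertices, each lying in at most one optimal hyperedge, so it absorbs at most $d$ blames in total), leaving each trail hyperedge responsible for at most $d-1$ blames. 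Since the lemma as stated only asks for \emph{some} $f(\alpha,d)$, your weaker constant would technically suffice for the statement, but to obtain the advertised bound $\frac{1}{(1+\alpha)\left(\frac{d-1}{\alpha}+d\right)}$ and hence $\frac{1}{(2d-1)+2\sqrt{d(d-1)}}$ at $\alpha=\sqrt{(d-1)/d}$, you need this $d-1$ refinement.
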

\begin{proof}
  We refer to the set of hyperedges that are added to $M$ in the end of Algorithm~\ref{alg:simple:stackless} as \emph{survivors} and denote them by $S$.
  Now a hyperedge $e\in S$ replaced some set of hyperedges $C_1(e)$ (possibly empty), and this set of hyperedges may have replaced hyperedges in a set $C_2(e)$, and so on.
  We define $C_0(e)=e$.
  For $i\geq 1$, the set $C_i(e)$ consists of the collection of hyperedges that were replaced by hyperedges in $C_{i-1}(e)$.
   Note that a hyperedge $e'\in C_{i-1}(e)$ can be responsible for replacing at most $d$ hyperedges from the current matching.
   Let $T(e):=\bigcup_{i\geq 1}C_i(e)$ be all hyperedges that were directly or indirectly replaced by $e$. We will refer to $T(e)$ as the trail of replacement of $e$.
   
   We now show that for a hyperedge $e\in S$, the total weight of the hyperedges that were replaced by $e$ (i.e., the weight of $T(e)$) is at most $\weight(e)/\alpha$.
   \begin{claim}
    \label{claim:weight}
   For any $e\in S$, $\weight(T(e))\leq \frac{\weight(e)}{\alpha}$.
   \end{claim}
   \begin{proof}
    For each replacing hyperedge~$e$, $\weight(e)$ is at least $(1+\alpha)$ times the weight of replaced hyperedges, and a hyperedge has at most one replacing hyperedge. Hence, for all $i$, $\weight(C_i(e))\geq (1+\alpha)\cdot\weight(C_{i+1}(e))$. Thus 
    
    {\small\[
      (1+\alpha)\cdot\weight(T(e))= \sum_{i\geq 1}(1+\alpha)\cdot\weight(C_i(e))=\sum_{i\geq 0}(1+\alpha)\cdot\weight(C_{i+1}(e)) 
                                     \leq \sum_{i\geq 0}\weight(C_i)=\weight(T(e))+\weight(e).
    \]}
   Simplifying,  we obtain $\weight(T(e))\leq \frac{\weight(e)}{\alpha}$.
   \end{proof}
   Now, we consider a charging scheme to prove the Lemma. We will charge the weight of each hyperedge in $M^\ast$ to the hyperedges of $S$ and their respective trails of replacements.
   During the charging process, we will maintain an invariant: the charge assigned to any hyperedge~$e$ is at most~$(1+\alpha)\cdot\weight(e)$. 
   
   For a hyperedge $f\in S\cap M^{\ast}$, we charge $f$ to itself, which satisfies the invariant.
   Consider a hyperedge $f\in M^\ast$ that arrives in the stream but is not inserted in to $M$. 
   There are at most~$d$ conflicting hyperedges in $M$ due to which $f$ was not selected.
   We charge $\weight(f)$ to these $d$ hyperedges as follows:
   If $f$ was not chosen because of a single hyperedge~$e_1$, we assign $\weight(f)$ to $e_1$.
   For more than one hyperedge, we distribute the charge for $f$ among the hyperedges in proportion to their weights.
   Formally, for $k$ conflicting hyperedges, the charge a hyperedge $e_i$ receives is given by $\weight(f)\frac{\weight(e_i)}{\sum_{i=1}^k\weight(e_i)}$.
   Note that since $(1+\alpha)\sum_{i=1}^{k}\weight(e_i)>\weight(f)$, each of these charges  is less than $(1+\alpha)\weight(e_i)$, which satisfies the invariant.
   
   Fix a hyperedge~$e$ in the final matching $M$. Since $e$ has at most $d$ vertices, it can be charged by no more than $d$ hyperedges in an optimal solution (the so-called original charges).
   We now look at the trail of replacements.
   A hyperedge in $T(e)$ could also be charged for $d$ optimal hyperedges.
   However, for any hyperedge $t$ in $T(e)$, we can distribute some charges to $e$.
   Any charge for an optimal hyperedge $f$ that is incident at any vertex in $t\cap e$ can be directly attributed to $e$ and is included in the prior original charges.
   Therefore, the charges for any hyperedge $t$ in $T(e)$ is reduced to $d-1$.
   We now use Claim~\ref{claim:weight} to conclude 
   
   \allowdisplaybreaks
   {\small
   \begin{alignat*}{2}
    \weight(M^\ast)&\leq \sum_{e\in M}(d-1)(1+\alpha)\weight(T(e))+d(1+\alpha)\weight(e) = (1+\alpha)\sum_{e\in M}((d-1)\weight(T(e))+d\weight(e))\\
    &\overset{C.~\ref{claim:weight}}{\leq} (1+\alpha)\sum_{e\in M}\left((d-1)\frac{\weight(e)}{\alpha}+d\weight(e)\right)=(1+\alpha)\sum_{e\in M}\left(\left(\frac{d-1}{\alpha}+d\right)\weight(e)\right) \\
    &= (1+\alpha)\left(\frac{d-1}{\alpha}+d\right)\weight(M).
   \end{alignat*}
   }%
   
By setting $\alpha=\sqrt{(d-1)/d}$ the term $(1+\alpha)\left(\frac{d-1}{\alpha}+d\right)$ is minimized, resulting in an approximation ratio of $\frac{1}{(2d-1)+2\sqrt{d(d-1)}}$.
\end{proof}
\section{Instance Statistics}\label{app:stats}
\begin{table}[h]
  \caption{Statistics for the  $L_{HG}$ data set collected by Gottesbüren et al.~\cite{DBLP:journals/corr/abs-2303-17679}.
 The SPM hypergraphs are generated from %
 sparse matrices from the Suite Sparse %
 Collection~\cite{sparsecollection}.
  The hypergraphs SAT14~\cite{satbenchmark} are constructed from satisfiability clauses, and 
the DAC instances  \cite{dacset} are routability hypergraphs.
  }
  \label{tab:stats}
    {
    \centering
    \begin{tabular}{lr SSS[scientific-notation = true,round-mode = figures,
round-precision = 2,
table-format=1.1e1
]S[scientific-notation = true,round-mode = figures,
round-precision = 2,
table-format=1.1e1
]S[scientific-notation = true,round-mode = figures,
round-precision = 2,
table-format=1.1e1
]S[scientific-notation = true,round-mode = figures,
round-precision = 2,
table-format=1.1e1
]S[scientific-notation = true,round-mode = figures,
round-precision = 2,
table-format=1.1e1
]S[scientific-notation = true,round-mode = figures,
round-precision = 2,
table-format=1.1e1
]
    }\toprule
      & &\multicolumn{2}{c}{Avg. Edge Size}&\multicolumn{2}{c}{$m$}&\multicolumn{1}{c}{Edge Size}&\multicolumn{1}{c}{Node Degree}&\multicolumn{2}{c}{$n$}\\
      \cmidrule(lr){3-4}\cmidrule(lr){5-6}\cmidrule(lr){7-7}\cmidrule(lr){8-8}\cmidrule(lr){9-10}
    $L_{HG}$&\# & $\max$&$\mathrm{avg}$& $\max$&$\mathrm{avg}$& $\max$& $\max$& $\max$&$\mathrm{avg}$
      \\\midrule 
      SPM &42&140.33& 38.14& 139353211 &8958822.46&2312481&2312476&139353211&9333539.19\\
  DAC2012&10&3.69 &3.41&1340418&912788&511685&2245&1360217&924053.70\\
  SAT14&42&185.79& 12.64&53616734&14757166.07&1777221&1777221&53616734&10962158.45
  \\\bottomrule
  \end{tabular}}
  \end{table}
  \newcommand{\justprint}[1]{\num{#1}}
  \begin{table}[h]
  \caption{
  Social instances from \cite{Benson-2018-simplicial} and own processing (wiki instance, stackoverflow\_posts).
  }
  \label{tab:stats:social}
    {
    \centering\begin{tabular}{lrrrrrr
    }\toprule
       &\multicolumn{2}{c}{Edge Size}&\multicolumn{1}{c}{$m$}&\multicolumn{1}{c}{Node Degree}&\multicolumn{1}{c}{$n$}\\
      \cmidrule(lr){2-3}\cmidrule(lr){4-4}\cmidrule(lr){5-5}\cmidrule(lr){6-6}
   Social & $\max$&$\mathrm{avg}$& & \multicolumn{1}{c}{$\max$}& 
      \\\midrule 
      en-wiki-cat& \justprint{228} & 3.82 & \justprint{7847749} & \justprint{1003480} & \justprint{293177} \\
      stackoverflow\_posts & \justprint{5} & 2.96 & \justprint{11846517} & \justprint{1143864} & \justprint{45426} \\
      tags-s'flow~\cite{Benson-2018-simplicial}& \justprint{5} & 2.97 & \justprint{14458875} & \justprint{1457906} & \justprint{49998}\\
      tags-askubuntu~\cite{Benson-2018-simplicial}& \justprint{5} & 2.71 & \justprint{271233} & \justprint{21004} & \justprint{3029}\\
      tags-m'-stx~\cite{Benson-2018-simplicial}&\justprint{5} & 2.19 & \justprint{822059} & \justprint{71046}&\justprint{1629} \\
      threads-s'flow~\cite{Benson-2018-simplicial}&  \justprint{67} & 2.26 & \justprint{11305356} & \justprint{36365} & \justprint{3455072}\\
      threads-asku'~\cite{Benson-2018-simplicial}&\justprint{14} & 1.80 & \justprint{192947} & \justprint{2332} & \justprint{200974} \\
      threads-m'-stx~\cite{Benson-2018-simplicial}&  \justprint{21} & 2.24 & \justprint{719792} & \justprint{12511} &\justprint{201860} 
  \\\bottomrule
  \end{tabular}}
  \end{table}

\end{document}